\shorttitle{A Generalized Coupon Collector Problem} 
\def\qed{\mbox{ }~\hfill~$\Box$}
\begin{document}

\title{A Generalized Coupon Collector Problem}


\authorone[Cornell University]{Weiyu Xu} 
\addressone{School of Electrical and Computer Engineering, Cornell University, Ithaca, NY 14853, USA. Email:
wx42@cornell.edu.} 

\authortwo[Cornell University]{A. Kevin Tang} 

\addresstwo{School of Electrical and Computer Engineering, Cornell University, Ithaca, NY 14853, USA. Email:
atang@ece.cornell.edu.} 


\begin{abstract}
This paper provides analysis to a generalized version of the
coupon collector problem, in which the collector gets $d$ coupons
each run and he chooses the one that he has the least so far. In
the asymptotic case when the number of coupons $n$ goes to
infinity, we show that on average $\frac{n\log n}{d} +
\frac{n}{d}(m-1)\log\log{n}+O(mn)$ runs are needed to collect $m$ sets of coupons. An efficient exact
algorithm is also developed for any finite case to compute the
average needed runs exactly. Numerical examples are provided to
verify our theoretical predictions.
\end{abstract}

\keywords{Coupon collector problem, expected runs, state-space representation, opportunistic scheduling, wireless communications} 

\ams{60G70}{60C05}

\section{Introduction}
The classic coupon collector problem asks for the expected number
of runs to collect a complete set of $n$ different coupons while
during each run the collector randomly gets a coupon. The answer
is $nH_n$ where $H_n=\sum_{k=1}^{n} \frac{1}{k}$ is the harmonic
number \cite{Fel50}. One can further ask for the expected number
of runs to collect $m$ complete sets of coupons, which has been
addressed by Newman and Shepp \cite{NeS60}.

The coupon collector problem and its variants are of traditional
and recurrent interest \cite{FZe03,Kan05,MWi03,Hol01,Nea08}. Besides of their
rich theoretical structures and implications, there are various
applications of them as well including dynamic resource
allocation, hashing and online load balancing \cite{ABK99}, to
just name a few. In particular, we want to point out that these
problems also serve as basic models to analyze delay for
opportunistic scheduling in broadcast wireless fading channels
\cite{ShH06}. For example, to maximize system throughput, we
should serve to the user whose channel condition is the best at
every time slot. In order to evaluate performance, one may ask
about the expected number of time slots needed for all users to be
served at least once. Assuming all channels are i.i.d., then this
is equivalent to the classic coupon collector problem.

In this paper, we investigate a natural generalization of the
coupon collector problem. Instead of getting one coupon, the
collector gets $d$ ($1 \leq d \leq n$) \emph{distinct} coupons randomly each run
and he picks one that so far he has the least. Formally, we denote
the number of runs (a ``run'' often referred to a unit of ``time slot'' or simply a unit of ``time'' in this paper) that are used to collect $m$ sets of coupons as $D^{d}_{m,n}$ and we are interested in characterizing the mean
value of this random variable $D^{d}_{m,n}$ especially in the asymptotic region
when $n$ is large. Clearly, when $d=1$, we go back to the classic
cases; when $d=n$, there is no randomness and
$D^{n}_{m,n}=mn$. In the scheduling transmission context we just
discussed, $d$ can be viewed as a parameter to control the
tradeoff between efficiency (high throughput) and fairness among
all users with $d=1$ purely focusing on efficiency while $d=n$
giving out perfect fairness.

In the remaining part of this paper, we first briefly review
existing related results in Section \ref{sec:existing}. Although they are all
special cases of the general problem, the techniques used to
derive them cannot be applied directly to the general case.
Instead, we develop a new technique to characterize
$E(D^{d}_{m,n})$ and provide upper and lower bounds for $E(D^{d}_{m,n})$ in Section \ref{lower} and Section \ref{sec:upper}.
An asymptotic analysis shows that the upper bound and lower bound match in the asymptotic regime of $n \rightarrow \infty$ in Section \ref{sec:approximation} .
Furthermore, for any finite $n$, an
algorithm is motivated and proposed in Section \ref{sec:accurate} to calculate
$E(D^{d}_{m,n})$ exactly. Finally, we use numerical examples to
validate our theoretical predictions in section
\ref{sec:simulation}.

\section{Existing Results}
\label{sec:existing} The existing results on special cases are
listed below. If $d=1$, then the problem is solved for all $m \geq 1$. If $d>1$, then
only the $m=1$ case is known.

\begin{itemize}
\item $d=1,m=1$ (\cite{Fel50}). It is clear that the number of
runs needed to obtain $(i+1)$-th coupon after obtaining the $i$-th
one follows a geometric distribution with parameter
$\frac{n-i}{n}$. Therefore,
\begin{equation} E(D^{1}_{1,n})=n H_n=n\sum_{k=1}^{n} \frac{1}{k} \end{equation}
For large $n$, \begin{equation} E(D^{1}_{1,n})=n\log n+nO(1)
\end{equation} One sees that the randomness cost is expressed
approximately by a factor $\log n$.

\item $d=1,m \geq 1$ (\cite{NeS60}).
\begin{equation}
E(D^{1}_{m,n})=n \int_0^{\infty}(1-(1-S_m(t)e^{-t})^n)dt
\end{equation} where $S_m(t)=\sum_{k=0}^{m-1}\frac{t^k}{k!}$.

For fixed $m$ and large $n$, \begin{equation} E(D^{1}_{m,n})=n\log n+
n(m-1)\log\log n+ n O(1).
\end{equation}
It is interesting to note that although collecting the first set
of coupons needs about $n\log n$ runs, all later sets only need
$n\log\log n$ runs per set.

\item $d \geq 1, m=1$ (\cite{ShH06}).
This has applications in the scheduling of data packets transmission over wireless channels. Here

\begin{equation} E(D^{d}_{1,n})=\sum_{i=0}^{n-1}\frac{1}{1-\frac{\binom{i}{d}}{\binom{n}{d}}}, \end{equation}
where $\binom{i}{d}=0$ if $i<d$.

For fixed $m$ and large $n$, \begin{equation}
E(D^{d}_{1,n})\sim \frac{1}{d} n\log n.
\end{equation}
This shows that for the $m=1$ case, allowing choosing $d$ coupons
randomly each time decreases the expected number of runs and most
of the decreases occurs from $d=1$ to $d=2$.

\item $d \geq 1, m \geq 1$.
In the mentioned context of scheduling, if a transmitter wants to send $m$ packets to each of the $n$ users, but each time he can only transmits one packet to one user chosen from the $d$ users who have the best wireless communication channels. Due to the time varying nature of the wireless channels, it is natural to assume that for each time index, the $d$ users who have the best communication channels are uniformly distributed among the $n$ users. So $E(D_{m,n}^d)$ gives an estimate on the total delay in delivering these $m$ packets, and, in this paper, we will offer a characterization of $E(D_{m,n}^d)$.


\end{itemize}

\section{Lower Bound on $E(D_{m,n}^{d})$}\label{lower}
We will first lower bound $E(D_{m,n}^{d})$ by considering a different coupon collecting process. In this new process, each time we uniformly select $d$ distinct coupons out of $n$ coupons, and instead of keeping only one coupon out of these $d$ selected coupons, we would keep all these $d$ coupons. Apparently, the expected time of collecting $m$ sets of coupons in this way will be no larger than the process which only keeps one coupon a time.

However, it is not so straightforward to directly get a estimate for this new process. This motivates us to consider another process in which each time, one will collect $d$ uniformly, independently chosen (allowing repeating) coupons and keep all of them. This process stops when $m$ sets of coupons are fully collected.

\begin{lemma} \label{lem:distinctcloseness}
Let $t_{1}$ be the expected time to collected $m$ sets of coupons for the process in which each time $d$ uniformly chosen distinct coupons are kept. Let $t_{2}$ be the expected time to collected $m$ sets of coupons for the process in which each time $d$ uniformly chosen (allowing repetition) coupons are kept. Then

\begin{equation} \label{eqnud}
t_{1} \geq \frac{\binom{n}{d}}{n^d}  t_{2}.
\end{equation}

\end{lemma}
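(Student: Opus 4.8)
The plan is to couple the two processes run-by-run on a common probability space so that the "distinct" process (process 1) always collects a superset of what the "with repetition" process (process 2) collects, conditioned on a favorable event, and to control the probability of that favorable event. At each run, process 2 draws an ordered $d$-tuple $(c_1,\dots,c_d)$ uniformly from $[n]^d$. Condition on the event $A$ that all $d$ entries are distinct; this has probability $\binom{n}{d}d!/n^d = \binom{n}{d}/n^d \cdot d!/\,$\ldots wait, more carefully $\Pr[A] = n(n-1)\cdots(n-d+1)/n^d = \binom{n}{d}d!/n^d$. Hmm, but the claimed constant in \eqref{eqnud} is $\binom{n}{d}/n^d$, not $\binom{n}{d}d!/n^d$; so the argument should instead compare the \emph{per-run yield}: when $A$ fails, process 2's yield is a strict subset of some distinct $d$-set, while when $A$ holds the yield is exactly a uniform distinct $d$-set — the same distribution as a single run of process 1. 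I would therefore set up the coupling so that on each run where $A$ holds, process 1 makes exactly the move process 2 makes, and on runs where $A$ fails, process 1 "does nothing" (or makes an independent move), and then argue by a renewal/Wald-type identity.

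\textbf{The core inequality.} Concretely: let $T_2$ be the (random) number of runs process 2 takes, and among those runs let $N$ be the number on which event $A$ occurs. On the coupled space, the multiset collected by process 1 after $N$ of \emph{its} runs dominates what process 2 collected in $T_2$ runs (since on non-$A$ runs process 2's contribution is dominated by a distinct $d$-set it would have drawn, and on $A$-runs the draws coincide). Hence process 1 has finished collecting $m$ sets by the time it has made $N$ successful-type runs, i.e. $T_1 \le N$ pointwise in the coupling, so $t_1 = \mathbb{E}[T_1] \le \mathbb{E}[N]$. Finally, $N$ is the number of successes in $T_2$ i.i.d.\ Bernoulli$(\Pr[A])$ trials where $T_2$ is a stopping time, so by Wald's identity $\mathbb{E}[N] = \Pr[A]\,\mathbb{E}[T_2] = \frac{\binom{n}{d}d!}{n^d} t_2$. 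To land exactly on \eqref{eqnud} with constant $\binom{n}{d}/n^d$, I would instead compare process 1 against the \emph{unordered} sampling-with-replacement process (draw a uniform multiset of size $d$), for which the probability that the multiset happens to be a genuine $d$-subset is $\binom{n}{d}/\binom{n+d-1}{d}$ — so the cleanest route is probably: observe $t_2 = \mathbb{E}[T_2]$ and that each run of process 1 "is worth" one favorable run of process 2, giving $t_1 \le \Pr[\text{favorable}] \cdot t_2$; the precise constant then just requires choosing the sampling convention for process 2 consistently with how $n^d$ appears, namely ordered-with-replacement, and noting $\Pr[\text{all distinct}] = \binom{n}{d}d!/n^d \ge \binom{n}{d}/n^d$.

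\textbf{Executing the domination step.} To make "process 1 after $N$ runs dominates process 2 after $T_2$ runs" rigorous, I would construct the coupling so that the $j$-th run of process 1 replays the draw of the $j$-th \emph{$A$-run} of process 2. Since that draw, conditioned on $A$, is exactly a uniform distinct $d$-subset, this is a faithful copy of process 1. The coupon counts of process 1 then equal those of process 2 restricted to $A$-runs; the extra (non-$A$) runs of process 2 only add more coupons, so process 2 reaching "$m$ full sets" implies process 1 has too, once it has processed all the $A$-runs — that is, once it has run $N$ times. Thus $T_1 \le N$ surely. Combined with Wald, this yields $t_1 \le \Pr[A]\, t_2$ and hence \eqref{eqnud}. \emph{Actually}, since the inequality we want is $t_1 \ge \frac{\binom{n}{d}}{n^d} t_2$, the domination must go the other way: process 2 does at least as well per run as the favorable-run-restricted version, so process 2 finishes no later, giving $N \le$ (time for the distinct process restricted to $A$-runs to finish) $= T_1$ in the coupling — wait, that gives $T_1 \ge N$, and then $t_1 = \mathbb{E}[T_1] \ge \mathbb{E}[N] = \Pr[A] t_2 \ge \frac{\binom nd}{n^d} t_2$. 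Yes: process 1's $N$ runs are dominated \emph{by} the full $T_2$-run history of process 2, so process 1 needs at least as many ($T_1 \ge N$) of its own runs to finish. This is the correct direction.

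\textbf{Main obstacle.} The delicate point is the domination/monotonicity step: I must verify that "collecting more coupons per run" genuinely translates into "finishing no later," i.e.\ that the stopping time $T_m$ is a monotone (non-increasing) functional of the accumulated coupon-count trajectory. This is intuitively obvious but needs a clean statement — e.g.\ a pointwise coupling in which each run of process 2 is split into its $A$-part (replayed by process 1) and a possible extra part, so that process 2's count vector coordinatewise dominates process 1's count vector at all times, making process 2's hitting time of "$\ge m$ everywhere" no later than process 1's. The other mild technical point is justifying Wald's identity, which needs $\mathbb{E}[T_2] < \infty$; that follows since $t_2$ is finite (each run has a positive chance of delivering any fixed coupon, so $T_2$ is stochastically dominated by a sum of geometrics). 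Assembling these gives \eqref{eqnud}.
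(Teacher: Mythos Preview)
Your final argument—couple both processes on the same i.i.d.\ sequence of $d$-tuples, let process~1 consume only the all-distinct draws, show $T_1 \ge N$ (where $N$ counts distinct draws among the first $T_2$), and apply Wald to get $\mathbb{E}[N]=\Pr[A]\,t_2$—is exactly the paper's expurgation argument, just phrased via Wald's identity rather than ``expected trials per distinct draw.'' You are also right that the coupling actually yields the sharper constant $\Pr[A]=d!\binom{n}{d}/n^d$; the paper's $\binom{n}{d}/n^d$ is simply a weaker (but still valid) consequence.
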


\begin{proof}
We simulate the process of choosing $d$ distinct coupons through an expurgated process of choosing $d$ independent coupons (allowing repetition). If the $d$ coupons we independently choose (allowing repeating) are not distinct, we will discard this group of $d$ coupons; if they are all distinct $d$ coupons, we will keep them. The kept coupons from expurgated process follow the same distribution as the chosen $d$ distinct coupons. However, the expected time for one to get a group of $d$ distinct coupons is clearly $\frac{n^d}{\binom{n}{d}}$. So in the worst case, $t_2 \leq \frac{n^d}{\binom{n}{d}} t_{1}$. \qed
\end{proof}

In summary, in order to give a lower bound on $E(D_{m,n}^{d})$, we will first need a lower bound on $t_2$ for the process of keeping $d$ uniformly randomly chosen coupons (allowing reptition). To do this, we follow the approach of generating functions in \cite{NeS60}.

Let $p_{i}$ be the probability of failure of obtaining $m$ sets of coupons when we have kept $i$ coupons. Let $P_{x_{1},\cdots, x_{n}}$ be a power series and let $\{P_{x_{1},\cdots, x_{n}}\}$ be the power series when all terms having all exponents $\geq m$ have been removed. By these notations,
\begin{equation*}
t_{2}=\sum_{j=0}^{\infty} p_{dj},
\end{equation*}
and
\begin{equation*}
p_{dj}=\frac{ \{(x_{1}+\cdots+ x_{n})^{dj}\} }{n^{dj}},
\end{equation*}
with ${x_{1},\cdots, x_{n}}$ all equal to $1$.

In addition, we know
\begin{equation*}
E(D^{1}_{m,n})=\sum_{q=0}^{d-1}\sum_{j=0}^{\infty} p_{dj+q}=n \int_0^{\infty}(1-(1-S_m(t)e^{-t})^n)dt,
\end{equation*}
where $S_m(t)=\sum_{k=0}^{m-1}\frac{t^k}{k!}$ \cite{NeS60}.

We also notice that $p_{i}$ is nonincreasing as $i$ grows, so
\begin{eqnarray*}
t_{2}&=&\sum_{j=0}^{\infty} p_{dj}
\geq (\sum_{j=0}^{\infty} p_{j})/d\\
&\geq& n \int_0^{\infty}(1-(1-S_m(t)e^{-t})^n)dt/d.
\end{eqnarray*}
So by (\ref{eqnud}), we know
\begin{equation*}
E(D_{m,n}^d)\geq  \frac{\binom{n}{d}}{n^d}\left( E(D_{m,n}^1)/d \right ).
\end{equation*}
\section{Upper Bound on $E(D_{m,n}^{d})$}\label{sec:upper}
In this section, we will upper bound the expected time of collecting $m$ complete sets of coupons. To achieve this, we will upper bound the expected time for collecting $m$ complete sets of coupons in a suboptimal process. In this new process, each time, we will uniformly and independently choose $d$ coupons (allowing repetition). Among this group of $d$ coupons, we will start looking at them one by one. If the $i$-th  ($1\leq i\leq d$) coupon is the first such a coupon that we so far have fewer than $m$ copies, then we will keep this $i$-th coupon and discard the remaining $(d-i)$ coupons.

First of all, we observe that $d$ distinct coupons are favorable in terms of minimizing the collection time compared with $d$ coupons with possible repeating.

\begin{theorem}\label{thm:bound2}
The minimized expected time of collecting $m$ sets of coupons, when each time the coupon collector is given $d$ uniformly chosen distinct coupons but is only allowed to keep $1$ coupon, is no bigger than  the minimized expected time of collecting $m$ set of coupons, when  each time the coupon collector is given $d$ uniformly chosen coupons (allowing repeating) but is only allowed to keep $1$ coupon.
\end{theorem}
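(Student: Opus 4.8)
The plan is to construct an explicit coupling between the two processes so that, run by run, the process that draws $d$ distinct coupons is never worse off than the process that draws $d$ coupons with possible repetition. Fix the optimal keep-rule for the repetition process; I will use it to build a (possibly suboptimal, but still admissible) keep-rule for the distinct process whose performance dominates, which suffices since the distinct process is then at least as good under its own optimal rule.

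The coupling is the natural one suggested by the expurgation idea already used in Lemma \ref{lem:distinctcloseness}. On each run, generate an ordered $d$-tuple $(c_1,\dots,c_d)$ of i.i.d.\ uniform coupons; this is the draw seen by the repetition process. From the same randomness, produce a draw for the distinct process by scanning $(c_1,\dots,c_d)$ and deleting any coupon that already appeared earlier in the tuple, then, if fewer than $d$ survive, appending enough fresh coupons drawn uniformly from those not yet in the tuple; the result is a uniformly random set of $d$ distinct coupons, as required. The key structural observation is that the multiset the repetition process sees is a \emph{sub-multiset} of the set the distinct process sees: every distinct value occurring among $c_1,\dots,c_d$ also occurs in the distinct draw, and with multiplicity at least as large (namely $1$ versus possibly more in the repetition tuple, but a repeated coupon is worthless for a collector who already has it or will get it only once). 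Hence, for the same current inventory, any coupon that the repetition process is able to usefully keep on this run is also available to be kept by the distinct process.

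First I would make precise the sense in which "the distinct draw is at least as good": I would show by induction on the number of runs that one can maintain inventory vectors $I^{\mathrm{dist}}_t \ge I^{\mathrm{rep}}_t$ coordinatewise (componentwise capped at $m$) under the coupled dynamics, where on each run the distinct process mimics whatever keep-decision the repetition process makes — this is legal precisely because of the sub-multiset property, and if the repetition process keeps a coupon already saturated in the distinct inventory the distinct process may keep any not-yet-saturated coupon instead (or, if none exists, it is already done). Since $I^{\mathrm{dist}}_t \ge I^{\mathrm{rep}}_t$ for all $t$, the distinct process reaches the all-$m$ state no later than the repetition process on every sample path of the coupling, so its collection time is stochastically dominated, and in particular its mean is no larger. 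Finally, the distinct process's \emph{optimal} keep-rule can only do better than this mimicking rule, which gives the claimed inequality.

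The main obstacle is the bookkeeping in the inductive step when the repetition process's kept coupon is useless to the distinct process (already at $m$ copies). One must argue that in that case the distinct process still has a strictly productive move available unless it is already finished — i.e., that $I^{\mathrm{dist}}_t \ge I^{\mathrm{rep}}_t$ together with "$I^{\mathrm{rep}}$ not yet complete" and "the coupon kept by the repetition process is below $m$ in $I^{\mathrm{rep}}$" forces the existence of some coupon in the distinct draw below $m$ in $I^{\mathrm{dist}}$. This is where the sub-multiset property must be used carefully, and getting the edge cases right (the distinct draw contains the repetition process's kept coupon, so if that coupon is $<m$ in $I^{\mathrm{dist}}$ we are immediately fine; otherwise $I^{\mathrm{dist}}$ strictly dominates there and we must exhibit slack elsewhere, or conclude completion) is the crux of the argument.
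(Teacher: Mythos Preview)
Your coupling argument is correct and is, in spirit, exactly what the paper does --- only the paper's entire proof is the single sentence ``Apparently, when the coupon collector is given $d$ distinct coupons, he has more choices in making his decisions.'' Your construction (take the distinct values of the i.i.d.\ $d$-tuple and top up uniformly from the complement) is the natural way to make that sentence rigorous: by $S_n$-equivariance the resulting set is uniform over $d$-subsets, and by construction it contains every type appearing in the repetition draw, so the distinct collector can always mimic the repetition collector's choice.

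One remark: the ``main obstacle'' you flag is not an obstacle. To maintain the coordinatewise (capped at $m$) domination $I^{\mathrm{dist}}_t \ge I^{\mathrm{rep}}_t$, you do \emph{not} need the distinct process to make a productive move when the mimicked coupon $j$ is already saturated in $I^{\mathrm{dist}}_t$. If the distinct process simply does nothing on that run, domination still holds: on coordinate $j$ we have $I^{\mathrm{dist}}_{t+1}[j]=m\ge I^{\mathrm{rep}}_{t+1}[j]$, and on every other coordinate neither inventory changed. Since $I^{\mathrm{dist}}_t \ge I^{\mathrm{rep}}_t$ for all $t$, the distinct process hits the all-$m$ state no later than the repetition process on every coupled sample path, and the inequality of expected times follows. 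So your inductive step closes immediately once you drop the unnecessary requirement of a ``strictly productive move,'' and the proof is complete without the edge-case analysis you were anticipating. (Your ``sub-multiset'' phrasing is also slightly garbled --- multiplicities in the distinct draw are $1$, not larger --- but as you note, only the underlying \emph{set} of types matters when a single coupon is kept, and that set inclusion is what you actually use.)
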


\begin{proof}
Apparently, when the coupon collector is given $d$ distinct coupons, he has more choices in making his decisions.\qed
\end{proof}

Secondly, we show that it is an optimal strategy for the coupon collector to keep the coupon out of the $d$ incoming coupons (whether allowing repetition or not), for which he has the fewest copies.
\begin{theorem}\label{thm:bound1}
The expected time of collecting $m$ sets of coupons is minimized when each time, the coupon collector keeps the coupon which he has the fewest so far, if the coupon collector is allowed to keep only $1$ out of the $d$ offered coupons.
\end{theorem}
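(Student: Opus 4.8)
The plan is to cast this as a finite-horizon stochastic control problem and to reduce optimality of the greedy rule to a single structural property of the value function: monotonicity under the majorization (balancedness) order. Encode the collector's progress by the vector $\mathbf{c}=(c_1,\dots,c_n)$ of current copy counts, calling a state \emph{complete} once $c_i\ge m$ for all $i$. For a strategy $\pi$ and a horizon $t$, let $f^{\pi}_t(\mathbf{c})$ be the probability that, started from $\mathbf{c}$ and run for $t$ more runs under $\pi$, the state is still incomplete, and set $f_t(\mathbf{c})=\min_{\pi} f^{\pi}_t(\mathbf{c})$. The principle of optimality gives $f_0(\mathbf{c})=1$ if $\mathbf{c}$ is incomplete and $0$ if it is complete, and, for incomplete $\mathbf{c}$,
\begin{equation*}
f_{t+1}(\mathbf{c}) \;=\; E_{S}\!\left[\,\min_{k\in S} f_t(\mathbf{c}+e_k)\,\right],
\end{equation*}
where $S$ is the random collection of the $d$ offered coupons and $e_k$ the $k$-th unit vector. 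This recursion holds verbatim whether the $d$ coupons are sampled distinct or with repetition---only the law of $S$ changes---so a single argument covers both the distinct and the with-repetition settings. Since the expected completion time from $\mathbf{c}$ under any $\pi$ equals $\sum_{t\ge0}f^{\pi}_t(\mathbf{c})$, it suffices to exhibit one strategy attaining $f_t(\mathbf{c})$ simultaneously for all $t$ and $\mathbf{c}$; such a strategy then attains $\sum_{t\ge0}f_t(\mathbf{c})$, which lower-bounds the expected completion time under every strategy.

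The core claim is that each $f_t$ is \emph{Schur-convex}: if $\mathbf{a}\prec\mathbf{b}$ (i.e. $\mathbf{a}$ is majorized by $\mathbf{b}$: equal totals, $\mathbf{a}$ ``more balanced''), then $f_t(\mathbf{a})\le f_t(\mathbf{b})$. Granting it, note that whenever $c_i\le c_j$ the vector $\mathbf{c}+e_i$ is obtained from $\mathbf{c}+e_j$ by a Robin Hood transfer and so is majorized by $\mathbf{c}+e_j$; hence $f_t(\mathbf{c}+e_i)\le f_t(\mathbf{c}+e_j)$, so in the recursion the inner minimum over $k\in S$ is always achieved at a coupon $k$ with $c_k$ smallest. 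A straightforward induction on $t$ then shows that the rule ``keep the offered coupon of which you currently hold the fewest copies'' realizes $f_t$ for every $t$ and $\mathbf{c}$, and by the previous paragraph this proves the theorem.

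It remains to prove Schur-convexity of $f_t$, which I would do by induction on $t$. The base case $t=0$ is immediate, since $\mathbf{a}\prec\mathbf{b}$ forces $\min_i a_i\ge\min_i b_i$, so $\mathbf{b}$ complete implies $\mathbf{a}$ complete. For the inductive step, because the majorization order is generated by single Robin Hood transfers it suffices to treat $\mathbf{b}=\mathbf{a}-e_p+e_q$ with $a_p\le a_q$ (so $\mathbf{a}\prec\mathbf{b}$, and $\mathbf{b}$ is then incomplete whenever $\mathbf{a}$ is). If $\mathbf{a}$ is complete the inequality is trivial; otherwise, comparing $f_{t+1}(\mathbf{a})$ and $f_{t+1}(\mathbf{b})$ through the recursion term by term in $S$, it is enough to find, for the coupon $k^{\ast}\in S$ that minimizes $f_t(\mathbf{b}+e_{k^{\ast}})$, some $k\in S$ with $f_t(\mathbf{a}+e_k)\le f_t(\mathbf{b}+e_{k^{\ast}})$. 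Taking $k=k^{\ast}$ works in every case: one checks directly from $a_p\le a_q$ that $\mathbf{a}+e_{k^{\ast}}$ is majorized by $\mathbf{b}+e_{k^{\ast}}$---whether $k^{\ast}\notin\{p,q\}$, or $k^{\ast}=p$ (where $\mathbf{b}+e_p=\mathbf{a}+e_q$ and $\mathbf{a}+e_p$ is more balanced than $\mathbf{a}+e_q$), or $k^{\ast}=q$---and the inductive hypothesis delivers $f_t(\mathbf{a}+e_{k^{\ast}})\le f_t(\mathbf{b}+e_{k^{\ast}})$.

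I expect the main obstacle to be precisely the step just finessed, which also explains why a naive ``let the greedy run imitate the other strategy'' coupling does not close directly: the coupon one would most like the greedy run to keep, in order to stay at least as balanced as the comparison run, need not be among the $d$ coupons offered on a given step. The resolution is the observation that the comparison strategy's \emph{own} retained coupon $k^{\ast}$ is by definition one of the offered coupons, hence also available to the greedy run; the transfer $\mathbf{a}\mapsto\mathbf{b}$ together with $a_p\le a_q$ is exactly what makes keeping $k^{\ast}$ preserve the majorization relation between the two states. With that pinned down the remaining checks are routine, and, as noted, nothing in the argument depends on whether the offered coupons are distinct or sampled with repetition.
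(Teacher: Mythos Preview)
Your argument is correct, and it reaches the goal by a genuinely different route than the paper. The paper proves the result via an explicit two-collector coupling: after a single non-greedy choice (state $A$) versus the greedy choice (state $B$), it lets the $B$-collector imitate the $A$-collector's subsequent decisions on the same coupon draws, tracks only the two coupon types on which $A$ and $B$ differ, and iterates until the two states become permutations of one another; from there symmetry finishes the comparison of optimal expected completion times. In effect the paper is establishing, by a somewhat ad hoc interchange argument, the two-coordinate case of the very Schur-convexity you prove directly for the finite-horizon value functions $f_t$. Your dynamic-programming formulation is cleaner and yields more: because you show $f_t^{\mathrm{greedy}}=f_t$ for every $t$, you actually get that greedy \emph{stochastically} minimizes the completion time (it minimizes $P(T>t)$ for all $t$), not merely its expectation; and, as you note, the argument is insensitive to whether the $d$ offered coupons are distinct or sampled with repetition, since only the law of $S$ enters. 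The paper's coupling, on the other hand, is more hands-on and makes the ``imitate the competitor's kept coupon'' idea visible; your last paragraph correctly identifies that the same idea (take $k=k^\ast$) is what closes your inductive step. One small point worth making explicit in a final write-up: the reduction to a single Robin Hood transfer relies on the standard fact that for integer vectors with equal sums the majorization order is generated by unit transfers from a larger to a not-larger coordinate, and your induction takes place on the enlarged state space where coordinates may exceed $m$; both are harmless but deserve a sentence.
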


\begin{proof}
Suppose (before the coupon collector finishes collecting all $m$ sets of coupons) among the uniformly chosen $d$ (either distinct or allowing repeating) coupons, the $j$-th type of coupon (there are in total $n$ types of coupons and $1\leq j \leq n$) is what he has the fewest, say $c_{1}$ copies. Suppose further that he chooses instead to keep a different type of coupon, say the $l$-th ($1\leq l \leq n$, $l\neq j$) type of coupon, and, for this type of coupon, the coupon collector has already got $c_{2}$ copies, where $c_{2}>c_{1}$. Immediately, we know that after keeping the $l$-th type of coupon, we have at least $c_{2}+1$ copies of $l$-th type of coupons, which satisfies $c_{2}+1\geq c_{1}+2$. We call the resulting state for the kept coupons as $A$, identified by the tuple $(c_1,c_{2}+1)$. Otherwise we would just keep the $j$-th type of coupon, then we will have $c_{1}+1$ coupons of type $j$ and $c_{2}$ coupons of type $l$. We call the resulting state in this case as $B$, identified by the tuple $(c_1+1,c_2)$.

Now we argue that to collect $m$ sets of coupons, on average starting from state $B$ will take no longer time than starting from state $A$. The main idea is to let the collector starting from $B$ follows the strategy of the collector starting from $A$, and do no worse in the expected delay.


Let us consider two coupon collectors, one starting from state $A$ and the other starting from state $B$. At each time index, these two collectors get the same $d$ coupons. A coupon collector will keep one coupon only if he has fewer than $m$ copies of coupons of that type. Suppose that the coupon collector starting from state $A$ follows his optimized ``keeping'' decision such that his expected time to fully collect $m$ sets of coupons is minimized.
Then we let the coupon collector starting from state $B$ follow the same decision process as the coupon collector starting from state $A$, until some time index when the coupon collector starting from state $A$ decides to keep a coupon of type $j$ or type $l$. At that time index, we also let the coupon collector starting from state $B$ keep the same type of coupon as the coupon collector starting from state $A$ does. Then let $\{c_{1}', c_{2}'\}$ denote the resulting numbers of kept coupons of type $j$ and type $l$, for the collector starting from state $A$; and likewise define $\{c_{1}'', c_{2}''\}$ for the collector starting from state $B$.

Now we can take an inspection of state $\{c_{1}', c_{2}'\}$ and state $\{c_{1}'', c_{2}''\}$. There are two scenarios to discuss separately.

In the first scenario, $c_{2}=c_{1}+1$ and it is the coupon of type $j$ that the coupon collector decides to keep at that time index. Then $\{c_{1}', c_{2}'\}=\{c_{1}+1, c_{2}+1\}$ and $\{c_{1}'', c_{2}''\}=\{c_{1}+2, c_{2}\}=\{c_{2}+1, c_{1}+1\}$. Apparently $\{c_{1}', c_{2}'\}$ and $\{c_{1}'', c_{2}''\}$ are just the permutations of each other, and so by symmetry, the optimized time from these two new states to the completion of collecting $m$ sets of coupons will be the same.

In the second scenario, we have $-c_{1}'+c_{2}'>-c_{1}''+ c_{2}''\geq 0$. In this scenario, we update state $A$ as $\{c_{1}', c_{2}'\}$ and update state $B$ as $\{c_{1}'', c_{2}''\}$; and then construct an iterative process of evolving states $A$ and $B$ as follows. (To keep the notations consistent, at the beginning of a new iteration, we always represent state $A$ and state $B$ by $\{c_{1}', c_{2}'\}$  and $\{c_{1}'', c_{2}''\}$, even though they are different numbers than in previous iterations.  We also remark that during these iterations, $c_{1}',c_{2}',c_{1}'',c_{2}''$ always satisfy the listed constraints (\ref{eq:cnst})-(\ref{eq:cnst2}), which will be obvious from the iterative process description.)

 At the beginning of each iteration, we have two coupon collectors starting from states $A$ and $B$ respectively. At each time index, the coupon collector starting from state $B$ will keep the same type of coupon as the coupon collector starting from state $A$, until some run they keep a coupon of either type $j$ or type $l$. Again, we have two cases to consider.

In the first case, $c_{1}''=c_{2}''$ and it is the coupon of type $j$ that the coupon collectors decide to keep at that time index. After keeping that
coupon, we have states $\{c_{1}'+1, c_{2}'\}$ and $\{c_{1}''+1, c_{2}''\}$ for the collectors starting from $A$ and starting from $B$, respectively.
So if $c_{1}''+1=c_{2}'$, then by symmetry, $\{c_{1}''+1,c_{2}''\}$ and $\{c_{1}'+1, c_{2}'\}$ are two equivalent states, and we are done (note that $c_{1}'+c_{2}'=c_{1}''+ c_{2}''$)
; otherwise, if $c_{1}''+1<c_{2}'$, we update state $A$ and $B$ as $\{c_{1}'+1, c_{2}'\}$ and $\{c_{2}'', c_{1}''+1\}$ respectively.

In the second case, after keeping that new coupon of type $j$ or type $l$, we simply update state $A$ and state $B$ respectively to record the new numbers of type $j$ and type $l$ coupons for these two coupon collectors.  Then we go back to the beginning of another iteration.


Note in all these iterations, we always maintain
\begin{eqnarray}
\label{eq:cnst}
c_{1}'+c_{2}'&=&c_{1}''+ c_{2}'',\\
c_{1}'&\leq &c_{2}',\\
c_{1}''&\leq &c_{2}'',\\
c_{2}'-c_{1}'&>&c_{2}''-c_{1}'',\\
c_{1}',c_{2}',c_{1}'',c_{2}''&\leq& m. \label{eq:cnst2}
\end{eqnarray}
Because of this, in each iteration, when the coupon collector starting from $A$ can keep a certain type of coupon, the coupon collector starting from state $B$ can also keep the same type of coupon.

Because for every iteration, we will increase $c_{1}'+c_{2}'$ and $c_{1}''+ c_{2}''$, and $c_{1}',c_{2}',c_{1}'',c_{2}''\leq m$, if we iterate the previous processes, we will eventually run into a pair of symmetric states in some iteration for these two coupon collectors. 
%


%
%
Since we can always end up in a symmetric state for the two coupon collectors, starting from $B$ to collect $m$ sets of coupons will not take longer time than starting from $A$.

\qed
\end{proof}

In fact, the previous arguments can essentially show that starting from a state $\{c_{1}', c_{2}'\}$ has no bigger expected collection time than from the state $\{c_{1}'', c_{2}''\}$, if $|c_{1}'- c_{2}'|<|c_{1}''-c_{2}''|$ and $c_{1}'+c_{2}'=c_{1}''+ c_{2}''$.

At this point, we are ready to present the following upper bound for $E(D_{m,n}^{d})$.

\begin{theorem}
Suppose that the coupon collector is given $d$ uniformly randomly chosen $d$ distinct coupons and he is only allowed to keep one out of these $d$ distinct coupons. Then the expected time  $E(D_{m,n}^{d}) \leq \frac{E(D_{m,n}^{1})}{d}+mn(1-1/d)$.
\end{theorem}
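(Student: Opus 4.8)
The plan is to sandwich $E(D_{m,n}^{d})$ between the optimal ``distinct'' process and the explicit suboptimal scan process described just above, and then to evaluate the latter by an epoch decomposition coupled to the classical $d=1$ collector. Precisely: by Theorem~\ref{thm:bound1} the fewest-copy rule is optimal, so $E(D_{m,n}^{d})$ is exactly the optimal expected time when $d$ distinct coupons are offered each run; by Theorem~\ref{thm:bound2} this is at most the optimal expected time in the ``with repetition'' model, which is in turn at most the expected time $E[R]$ of the particular repetition strategy that scans the $d$ i.i.d.\ coupons and keeps the first one whose current count is still below $m$. So it suffices to prove $E[R]\le E(D_{m,n}^{1})/d+mn(1-1/d)$.

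Next I would analyze the scan process by the bookkeeping familiar from the classical collector. Call a run \emph{productive} if some coupon is kept and \emph{wasted} otherwise; the process keeps exactly one coupon per productive run and stops precisely when $mn$ copies have been amassed, so there are exactly $mn$ productive runs and $R=mn+W$, where $W$ is the number of wasted runs. Partition the runs into the $mn$ epochs between successive keepings. Throughout the $k$-th epoch the set of ``full'' types (those already at $m$ copies), of size $g_{k-1}$ say, is frozen, each run keeps nothing independently with probability $(g_{k-1}/n)^{d}$, and the epoch ends at the first run that keeps something; hence $E[W]=\sum_{k=1}^{mn}E\!\left[\frac{(g_{k-1}/n)^{d}}{1-(g_{k-1}/n)^{d}}\right]$. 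The same computation for the classical process (where a run keeps nothing with probability $f_{k-1}/n$, and $f_{k-1}$ is the number of full types after $k-1$ keepings there) gives the identity $E(D_{m,n}^{1})-mn=\sum_{k=1}^{mn}E\!\left[\frac{f_{k-1}/n}{1-f_{k-1}/n}\right]$.

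Two facts then make these match. First, a distributional identity: $g_{k-1}$ has the same law as $f_{k-1}$ for every $k$. The reason is that, conditioned on the current set $F$ of full types, the next kept coupon is uniform on $[n]\setminus F$ in \emph{both} processes---for the scan process because the i.i.d.\ draws and the scan rule are invariant under relabeling of the non-full types---and since the full set after $k$ keepings is a deterministic function of the sequence of kept coupons, induction shows the kept-coupon sequences have the same law, hence so do $g_{k-1}$ and $f_{k-1}$. Second, an elementary inequality: $\frac{x^{d}}{1-x^{d}}\le\frac{1}{d}\cdot\frac{x}{1-x}$ for $x\in[0,1)$, which after clearing denominators is just $d x^{d}\le x+x^{2}+\cdots+x^{d}$, true because $x^{i}\ge x^{d}$ for $1\le i\le d$. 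Applying the inequality with $x=g_{k-1}/n$ (note $g_{k-1}\le n-1$ for $k\le mn$, since the process is not yet finished, so $x<1$) and then the identity, $E[W]\le\frac{1}{d}\big(E(D_{m,n}^{1})-mn\big)$, whence $E[R]=mn+E[W]\le E(D_{m,n}^{1})/d+mn(1-1/d)$, and the sandwich from the first paragraph gives the theorem.

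I expect the one genuinely substantive step to be the distributional identity: one must notice that, although the greedy (fewest-copy) rule and the scan rule look quite different, for the sole question of \emph{which} type is collected next both simply return a uniformly random not-yet-full type, so the ``full-type-count'' process of the scan model is statistically indistinguishable from that of the classical model---this is precisely what allows $E(D_{m,n}^{1})$ to resurface on the right-hand side. The two reductions through Theorems~\ref{thm:bound1} and \ref{thm:bound2}, the epoch decomposition, and the one-line algebraic inequality are all routine by comparison.
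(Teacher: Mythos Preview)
Your argument is correct and follows the same skeleton as the paper: reduce via Theorems~\ref{thm:bound2} and~\ref{thm:bound1} to the first-useful-scan process with i.i.d.\ offers, observe that the kept-coupon sequence there has the same law as in the classical $d=1$ process, and then bound epoch by epoch. The paper phrases your distributional identity as ``the probability of following a specific keeper sequence is the same for both processes,'' which is exactly your observation that the next kept coupon is uniform on the non-full types in both models.

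The one place you genuinely diverge is the pointwise inequality. The paper proves $E'-E''/d\le 1-1/d$ through Lemma~\ref{lem:asymptotics0}, a calculus argument that shows $f(i)=\tfrac{n}{di}-\tfrac{1}{1-(1-i/n)^d}$ is monotone decreasing on $[1,n]$ (via a chain of auxiliary functions $g,h$ and their derivatives) and then reads off the bound from $f(n)=\tfrac{1}{d}-1$. Your inequality $\tfrac{x^{d}}{1-x^{d}}\le \tfrac{1}{d}\cdot\tfrac{x}{1-x}$ is equivalent to theirs after setting $x=s/n$, but your proof via $d\,x^{d}\le x+x^{2}+\cdots+x^{d}$ is a one-line algebraic fact that sidesteps differentiation entirely. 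So your route is the same in architecture but strictly more elementary at this step; the paper's lemma does buy the extra monotonicity statement, which is not needed for the theorem itself.
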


\begin{proof}
From Theorem \ref{thm:bound1} and \ref{thm:bound2}, we will consider an upper bound on the expected finishing time if each run the coupon collector is given $d$ independently chosen coupons (allowing repeating) and he decides to keep only the first ``useful'' coupon when it is available among the $d$ coupons. Here a kept ``useful'' coupon means this coupon is kept before the coupon collector has $m$ copies of that type of coupon.

The idea of the proof is to upper bound the expected finishing time conditioning on a a specific sequence of kept coupons, until we have $m$ sets of coupons. By a specific sequence of ``keeper'' coupons, we mean a sequence of kept ``useful'' coupons, specified in their types and the order of keeping them. We note that a specific sequence of ``keeper'' coupons contain $mn$ coupons.

First we make a key observation about the probability that the coupon collector follows a specific sequence of ``keeper'' coupons. This probability will be the same as the corresponding probability that he follows this specific sequence of kept coupons, in another collection process where each run he is only offered $1$ instead of $d$ independently, uniformly chosen coupons. This is because, when the coupon collector is offered $d$ coupons, he still checks them one by one and only keeps the first one that is ``useful''.

Now, conditioning on a specific sequence of kept coupons, we are interested in the expected time to collect that sequence. Suppose that right after the $r$-th coupon in this ``keeper'' sequence has just been kept, there are $s$ types of coupons for which the coupon collector have $m$ copies. We then want to know what is the expected number of runs needed to collect the next $(r+1)$-th ``keeper'' coupon, conditioning on the whole ``keeper'' coupon sequence is known.

Through the conditional event that the ``keeper'' sequence has already specified, it is a standard exercise to show that the average time the coupon collector takes to collect the $(r+1)$-th ``keeper'' coupon will be
\begin{equation*}
E'=\frac{1}{1-(1-\frac{n-s}{n})^d}.
\end{equation*}

In fact, conditioned on the fact that the next ``keeper'' coupon is already specified, unless the next inspected coupon belongs to the $s$ types of coupons for which the collector has already had $m$ copies, it must be the $(r+1)$-th ``keeper'' coupon. So conditioned on the already specified next ``keeper'' (the same as conditioned on the whole ``keeper'' sequence is known, because the collecting process is a Markov chain), a uniformly chosen coupon is the specified ``keeper'' with probability $\frac{n-s}{n}$. So conditioned on the $(r+1)$-th ``keeper'' is known, with probability $(1-\frac{n-s}{n})^d$,  none of the $d$ uniformly chosen (allowing repeating) coupons is the known $(r+1)$-th ``keeper''.

However, we note that if the coupon collector is only offered $1$ instead of $d$ coupons each time, then the expected time to get the $(r+1)$-th coupon in that sequence on average will be
\begin{equation*}
E''=\frac{n}{n-s}.
\end{equation*}

By Lemma \ref{lem:asymptotics0} proven latter than, we know that
\begin{equation*}
E'-E''/d \leq 1-\frac{1}{d}
\end{equation*}
for any $1\leq s \leq n$.

Since there are exactly $mn$ coupons in any specific sequence of ``keeper'' coupons, the total expected time $E_{S}'$ for collecting a whole ``keeper'' sequence $S$, when each time the coupon collector has $d$ coupons (allowing repeating) to choose from, and the total expected time $E_{S}''$ for collecting the same ``keeper'' sequence $S$, when each time the coupon collector only has $1$ incoming random coupon, satisfy
\begin{equation} \label{eqn5}
E_{S}'-E_{S}''/d \leq mn(1-\frac{1}{d}).
\end{equation}

By invoking the fact that the coupon collector follows any specific sequence of ``keeper'' coupons with the same probability for both the $d=1$ case and the $d\neq 1$, (\ref{eqn5}) implies
\begin{equation}
E(D_{m,n}^{d}) \leq \frac{E(D_{m,n}^{1})}{d}+mn(1-1/d),
\end{equation}
for any $d$, which is exactly the theorem statement.\qed
\end{proof}

\begin{lemma}\label{lem:asymptotics0}
Function $f(i) = \frac{n}{d\cdot i} -\frac{1}{1-(1-\frac{i}{n})^d}$
is decreasing for $1\leq i\leq n$; and, $\frac{1}{d}-1\leq f(i)\leq 0$ for $1\leq i\leq n$.
\end{lemma}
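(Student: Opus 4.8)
The plan is to pass to the continuous variable $x = i/n \in (0,1]$ and to study the single function $g(x) := \frac{1}{dx} - \frac{1}{1-(1-x)^d}$, so that $f(i) = g(i/n)$; both claims about $f$ then reduce to properties of $g$ on $(0,1]$. First I would record that $1-x \in [0,1)$ for $x \in (0,1]$, hence $(1-x)^d \in [0,1)$ and $1-(1-x)^d \in (0,1]$, so every denominator appearing below is strictly positive and $g$ is smooth on $(0,1]$. (When $d=1$ one checks directly that $g \equiv 0$, and all claims hold trivially, so one may assume $d \ge 2$ in the estimates below.)

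For the upper bound $f \le 0$, I would simply invoke Bernoulli's inequality $(1-x)^d \ge 1 - dx$, valid for $x \in [0,1]$ and integer $d \ge 1$. This gives $1 - (1-x)^d \le dx$, and since both quantities are positive on $(0,1]$, $\frac{1}{1-(1-x)^d} \ge \frac{1}{dx}$, i.e. $g(x) \le 0$ for all $x \in (0,1]$.

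For the monotonicity I would differentiate, obtaining
\[
g'(x) = -\frac{1}{dx^2} + \frac{d\,(1-x)^{d-1}}{\bigl(1-(1-x)^d\bigr)^2}.
\]
Thus $g'(x) \le 0$ is equivalent, after clearing the positive denominators and taking square roots of the two nonnegative sides, to $dx\,(1-x)^{(d-1)/2} \le 1-(1-x)^d$. Writing $u = 1-x \in [0,1)$ and using the factorization $1-u^d = (1-u)\sum_{k=0}^{d-1} u^k$ together with $1-u = x > 0$, this reduces to
\[
d\,u^{(d-1)/2} \le \sum_{k=0}^{d-1} u^k,
\]
which is exactly AM--GM applied to the $d$ numbers $u^0, u^1, \dots, u^{d-1}$, whose geometric mean is $u^{(0+1+\cdots+(d-1))/d} = u^{(d-1)/2}$. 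Hence $g$ is non-increasing on $(0,1]$, so $f$ is decreasing in $i$ on $[1,n]$. The lower bound is then immediate: a decreasing function on $1 \le i \le n$ attains its minimum at $i = n$, and $f(n) = g(1) = \frac{1}{d} - \frac{1}{1-0} = \frac{1}{d} - 1$; combined with $f \le 0$ this yields $\frac{1}{d} - 1 \le f(i) \le 0$.

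The only step with real content is the monotonicity argument: the derivative computation and the algebraic rearrangement into the form amenable to AM--GM. The main points requiring care there are that the factor $1-u = x$ being cancelled is strictly positive (which holds because $i \ge 1 > 0$) and that squaring is legitimate (both $dx(1-x)^{(d-1)/2}$ and $1-(1-x)^d$ are nonnegative). Everything else — the Bernoulli estimate and the evaluation at the endpoint $i=n$ — is routine bookkeeping.
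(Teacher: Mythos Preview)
Your proof is correct, and the overall architecture matches the paper's: both differentiate, clear denominators, and reduce the monotonicity claim to the single inequality
\[
d\,x\,(1-x)^{(d-1)/2} \;\le\; 1-(1-x)^d \qquad (0<x\le 1),
\]
then read off the lower bound $\frac{1}{d}-1$ by evaluating at $i=n$. The genuine difference lies in how this key inequality is established. The paper introduces the auxiliary function $g(x)=(1-x)^d + dx(1-x)^{(d-1)/2}$, differentiates, and then introduces a \emph{second} auxiliary function $h(x)=x+dx+2(1-x)^{(d+1)/2}-2$, differentiates again, and chases the chain of inequalities back up. Your argument is considerably cleaner: after cancelling the positive factor $1-u$ from $1-u^d=(1-u)\sum_{k=0}^{d-1}u^k$, the inequality becomes $d\,u^{(d-1)/2}\le \sum_{k=0}^{d-1}u^k$, which is precisely AM--GM on $u^0,\dots,u^{d-1}$. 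This one-line replacement for the paper's nested-derivative computation is the main improvement. You also handle the upper bound $f\le 0$ directly via Bernoulli for all $i$, whereas the paper's written proof only explicitly derives the lower bound; your treatment is more complete on that point.
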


\noindent {\bf Proof.}
We need to show that the derivative
\[f'(i) = \frac{d (1-\frac{i}{n})^{d-1}}{n (1-(1-\frac{i}{n})^d)^2}-\frac{n}{d\cdot i^2}\]
for $i \in [1,n]$.

Let \[g(x) = (1 - x)^d + d\cdot x (1 - x)^{\frac{d - 1}{2}}.\] So
\[g'(x) = -\frac{1}{2} d (1-x)^{\frac{d-3}{2}}\left(x+ d\cdot
x+2(1-x)^{\frac{d+1}{2}}-2 \right)\]

Also let \[h(x) = x+ d\cdot x+2(1-x)^{\frac{d+1}{2}}-2.\]
So \[h'(x) = 1+d-(1+d)(1-x)^{\frac{d-1}{2}} \geq 0,\]
and  $h(x)\geq h(0) = 0$ for $x>0$.
Because
\[g'(x) = -\frac{1}{2} d (1-x)^{\frac{d-3}{2}}h(x)\leq 0 \mbox{\; for\; }
0<x\leq 1,\]
we have
\[g(x) \leq g(0) = 1 \mbox{\; for\; }
0<x\leq 1.\]
This translates into
\[1- (1 - x)^d \geq d\cdot x (1 - x)^{\frac{d - 1}{2}}\mbox{\; for\; }
0<x\leq 1,\]
so
\[\frac{1}{(1- (1 - x)^d)^2} \leq \frac{1}{d^2\cdot x^2 (1 - x)^{d-1}}\mbox{\; for\; }
0<x\leq 1.\]
Namely
\[\frac{d(1 - x)^{d-1}}{n(1- (1 - x)^d)^2} \leq \frac{1}{n\cdot d\cdot x^2 }\mbox{\; for\; }
0<x\leq 1.\]
Plugging in $x=\frac{i}{n}$, we have
\[\frac{d(1 - \frac{i}{n})^{d-1}}{n(1- (1 - \frac{i}{n})^d)^2} \leq \frac{n}{d\cdot i^2 }\mbox{\; for\; }
1\leq i\leq n.\]
So
\[f'(i)\leq 0 \mbox{\; for\; }1\leq i\leq n.\]
Calculating $f(n)$, we have
\[\frac{n}{d\cdot i} -\frac{1}{1-(1-\frac{i}{n})^d}\geq \frac{1}{d}-1\mbox{\; for\; }1\leq i\leq n. \]\qed

\section{An Asymptotic Analysis ($n\rightarrow \infty$)}
\label{sec:approximation}

In this section, we will give an asymptotic analysis of the upper and lower bounds for $E\{D^{d}_{m,n}\}$ and see how it behaves asymptotically for fixed $d$ and $m$ as $n$ goes to $\infty$. We will begin with an asymptotic analysis through an exact expression for $E\{D^{d}_{1,n}\}$.

\begin{theorem}\label{thm:asymptotics0}
When $n$ is large enough and $d>1$,
\begin{equation}
E[D^{d}_{1,n}]=n\left(\frac{\log n}{d}+O(1)\right)
\end{equation}\end{theorem}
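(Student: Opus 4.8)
The plan is to pin down $E[D^{d}_{1,n}]$ directly from the exact formula
\[
E[D^{d}_{1,n}]=\sum_{i=0}^{n-1}\frac{1}{1-\binom{i}{d}/\binom{n}{d}},\qquad \binom{i}{d}:=0\ \text{for}\ i<d,
\]
by squeezing it between $\tfrac{n}{d}H_{n}+O(n)$ from above and $\tfrac{n-d+1}{d}H_{n-d}$ from below. Since $d$ is a fixed constant and $H_{N}=\log N+O(1)$, both of these equal $\tfrac{n\log n}{d}+O(n)=n\bigl(\tfrac{\log n}{d}+O(1)\bigr)$, which is the claim.

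For the upper bound there is essentially nothing left to prove: applying the upper bound theorem of Section~\ref{sec:upper} with $m=1$ gives $E[D^{d}_{1,n}]\le \tfrac{E[D^{1}_{1,n}]}{d}+n\bigl(1-\tfrac1d\bigr)=\tfrac{nH_{n}}{d}+O(n)$. A self-contained derivation is just as quick: since $\binom{i}{d}/\binom{n}{d}=\prod_{j=0}^{d-1}\tfrac{i-j}{n-j}\le(i/n)^{d}$ (each factor obeys $\tfrac{i-j}{n-j}\le\tfrac in$ because $i\le n$), every summand is at most $\tfrac{1}{1-(i/n)^{d}}$; then Lemma~\ref{lem:asymptotics0}, applied with its free variable set equal to $n-i$ (which ranges over $[1,n]$), gives $\tfrac{1}{1-(i/n)^{d}}\le\tfrac{n}{d(n-i)}+1-\tfrac1d$, and summing over $i$ yields $\tfrac{n}{d}\sum_{i=0}^{n-1}\tfrac{1}{n-i}+n(1-\tfrac1d)=\tfrac{n}{d}H_{n}+O(n)$.

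For the lower bound I would keep only the terms with $i$ near $n$, where collecting the next new coupon takes the longest. Writing $i=n-l$, the Weierstrass product inequality gives, for $1\le l\le n-d$,
\[
\frac{\binom{n-l}{d}}{\binom{n}{d}}=\prod_{j=0}^{d-1}\Bigl(1-\frac{l}{n-j}\Bigr)\ \ge\ 1-\sum_{j=0}^{d-1}\frac{l}{n-j}\ \ge\ 1-\frac{dl}{n-d+1},
\]
so $1-\binom{n-l}{d}/\binom{n}{d}\le\tfrac{dl}{n-d+1}$, and hence $\tfrac{1}{1-\binom{n-l}{d}/\binom{n}{d}}\ge\tfrac{n-d+1}{dl}$ (this last inequality is trivial when $\tfrac{dl}{n-d+1}\ge1$, since the left-hand side is always $\ge1$). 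Discarding from the exact sum only the $d$ terms with $0\le i\le d-1$ (each equal to $1$) and using this bound on the rest,
\[
E[D^{d}_{1,n}]\ \ge\ \sum_{l=1}^{n-d}\frac{n-d+1}{dl}\ =\ \frac{n-d+1}{d}\,H_{n-d}\ =\ \frac{n\log n}{d}+O(n),
\]
again because $d$ is fixed. Combining the two bounds gives $E[D^{d}_{1,n}]=n\bigl(\tfrac{\log n}{d}+O(1)\bigr)$.

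The only delicate issue is that both bounds must deliver the leading coefficient $\tfrac1d$ and not some larger $d$-dependent factor. On the upper side this is precisely what Lemma~\ref{lem:asymptotics0} provides: it identifies the expected number of runs needed to obtain the $(i+1)$-st distinct coupon as $\tfrac{n}{d(n-i)}$ up to an additive $O(1)$. On the lower side it comes from $\sum_{j=0}^{d-1}\tfrac{l}{n-j}=\tfrac{dl}{n}\bigl(1+O(1/n)\bigr)$ — the $d$ in the numerator is exactly what produces the $\tfrac1d$ — so that $1-\binom{n-l}{d}/\binom{n}{d}\le\tfrac{dl}{n}\bigl(1+O(1/n)\bigr)$. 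I expect the routine part of the write-up to be verifying these first-order estimates and confirming that $\tfrac{n-d+1}{d}H_{n-d}=\tfrac{n\log n}{d}+O(n)$ for fixed $d$; there should be no genuine obstacle.
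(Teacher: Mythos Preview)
Your proposal is correct and follows essentially the same route as the paper. The upper bound is identical (the paper does exactly your ``self-contained'' version, bounding $\binom{i}{d}/\binom{n}{d}\le(i/n)^d$ and then invoking Lemma~\ref{lem:asymptotics0}); for the lower bound the paper first bounds $\prod_{j}\tfrac{i-j}{n-j}\ge\bigl(\tfrac{i-d+1}{n-d+1}\bigr)^{d}$ and then applies Bernoulli, whereas you apply the Weierstrass inequality directly to the product, but both paths land on the same termwise estimate $\tfrac{1}{1-\binom{n-l}{d}/\binom{n}{d}}\ge\tfrac{n-d+1}{dl}$ and hence the same harmonic sum.
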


\noindent {\bf Proof.}
\begin{eqnarray*}
& & E[D^{d}_{1,n}]\\
& = & \sum_{i=0}^{n-1}\frac{1}{1-\frac{{i \choose d}}{{n \choose d}}} = \sum_{i=0}^{n-1}\frac{1}{1-\frac{i(i-1)\ldots(i-d+1)}{n(n-1)\ldots(n-d+1)}}\\
& \geq & d+\sum_{i=d}^{n-1}\frac{1}{1-\left(\frac{i-d+1}{n-d+1}\right)^d}\\
& = & d+\sum_{i=1}^{n-d}\frac{1}{1-\left(1-\frac{i}{n-d+1}\right)^d}.
\end{eqnarray*}

Since $(1-x)^d \geq 1- dx$ for $1 \leq x \leq 1$,
\begin{eqnarray*}
E(D^{d}_{1,n})
& \geq & \sum_{i=1}^{n}\frac{n-d+1}{d\cdot i} = \frac{n-d+1}{nd}\sum_{i=0}^{n-1}\frac{n}{n-i}\\
& = & (1-\frac{d-1}{n})\frac{E(D^{1}_{1,n})}{d} \rightarrow  \frac{1}{d}E(D^{1}_{1,n})
\end{eqnarray*}
as $n\rightarrow \infty$.

\begin{eqnarray*}
E(D^{d}_{1,n})& = & \sum_{i=0}^{n-1}\frac{1}{1-\frac{i(i-1)\ldots(i-d+1)}{n(n-1)\ldots(n-d+1)}}\\
& \leq &\sum_{i=0}^{n-1}\frac{1}{1-(\frac{i}{n})^d}
 = \sum_{i=1}^{n}\frac{1}{1-(1-\frac{i}{n})^d}\\
& \leq & \sum_{i=1}^{n}(\frac{n}{d\cdot i}-\frac{1}{d}+1)\\
& = & \frac{1}{d}E(D^{1}_{1,n})+n(1-\frac{1}{d}).
\end{eqnarray*}
\qed

%
%
\begin{theorem}
\label{thm:asymptoticsnlarge} When $m$ is fixed, then for any $d>1$,
\begin{equation}
\lim_{n \rightarrow \infty} \frac{E(D^{d}_{m,n})-n\log(n)/d}{(n(m-1)\log\log n)/d}=1
\end{equation}
\end{theorem}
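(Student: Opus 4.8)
The plan is to \emph{sandwich} $E(D^d_{m,n})$ between the upper bound of Section \ref{sec:upper} and the lower bound of Section \ref{lower}, substitute the Newman--Shepp expansion of $E(D^1_{m,n})$ into both, and check that the gap between the two sides is only $O(n)$, which is $o\bigl(n\log\log n\bigr)$ since $\log\log n\to\infty$. Here $d>1$ and $m\ge2$ are fixed and $n\to\infty$; for $m=1$ the denominator vanishes and the relevant statement is Theorem \ref{thm:asymptotics0}.

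First I would recall from Section \ref{sec:existing} that for fixed $m$, $E(D^1_{m,n})=n\log n+n(m-1)\log\log n+nO(1)$. Plugging this into the upper bound $E(D^d_{m,n})\le \frac{E(D^1_{m,n})}{d}+mn(1-\frac1d)$ of Section \ref{sec:upper} gives
\begin{equation*}
E(D^d_{m,n})\;\le\;\frac{n\log n}{d}+\frac{n(m-1)\log\log n}{d}+O(n),
\end{equation*}
because $\frac1d\,nO(1)$ and $mn(1-\frac1d)$ are both $O(n)$ for fixed $d,m$.

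For the matching lower bound I would use the estimate of Section \ref{lower}, which (for fixed $d$) reads $E(D^d_{m,n})\ge c_n\,\frac{E(D^1_{m,n})}{d}$ with prefactor $c_n=\prod_{j=0}^{d-1}\bigl(1-\tfrac jn\bigr)=1-O(1/n)$. Since $E(D^1_{m,n})=\Theta(n\log n)$, the multiplicative loss costs only $\bigl(1-c_n\bigr)\frac{E(D^1_{m,n})}{d}=O(\log n)$, so
\begin{equation*}
E(D^d_{m,n})\;\ge\;\frac{E(D^1_{m,n})}{d}-O(\log n)\;=\;\frac{n\log n}{d}+\frac{n(m-1)\log\log n}{d}+O(n).
\end{equation*}

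Combining the two displays yields $E(D^d_{m,n})-\frac{n\log n}{d}=\frac{n(m-1)\log\log n}{d}+O(n)$, and dividing by $\frac{n(m-1)\log\log n}{d}$ (nonzero for $m\ge2$ and $n$ large) gives $1+O\bigl(1/\log\log n\bigr)\to1$. The only delicate point is the lower-bound prefactor $c_n$: one must confirm $1-c_n=O(1/n)$ for fixed $d$ and then note that multiplying it by $E(D^1_{m,n})=\Theta(n\log n)$ contributes merely $O(\log n)=o(n\log\log n)$; everything else is routine accounting of the $O(n)$ error terms, and no ingredient beyond Sections \ref{lower}--\ref{sec:upper} and the quoted Newman--Shepp expansion is needed.
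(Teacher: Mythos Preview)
Your approach is exactly the paper's: sandwich $E(D^d_{m,n})$ between the bounds of Sections~\ref{lower} and~\ref{sec:upper} and substitute the Newman--Shepp expansion of $E(D^1_{m,n})$. You in fact carry out the error accounting more explicitly than the paper, which simply asserts that the two bounds force $E(D^d_{m,n})/E(D^1_{m,n})\to 1/d$ and says the conclusion ``emerges immediately''; your observation that both bounds equal $E(D^1_{m,n})/d+O(n)$ is precisely what is needed to reach the second-order term.

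One remark on the lower-bound prefactor: your $c_n=\prod_{j=0}^{d-1}(1-j/n)$ equals $d!\binom{n}{d}/n^d$, not the $\binom{n}{d}/n^d$ printed in Section~\ref{lower}. Your value is the correct one, since the probability that $d$ independent draws are all distinct counts \emph{ordered} $d$-tuples, i.e.\ $n(n-1)\cdots(n-d+1)/n^d$; the paper's displayed constant is missing the $d!$, and with the correct $c_n\to1$ the sandwich closes exactly as you describe.
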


\noindent {\bf Proof.}
From the lower bound and upper bound for $E(D^{d}_{m,n})$ in Section \ref{lower} and Section \ref{sec:upper}, we know
\begin{equation*}
\lim_{n \rightarrow \infty} \frac{E(D^{d}_{m,n})}{E(D^{1}_{m,n})}=\frac{1}{d}.
\end{equation*}
Then the asymptotic expression emerges immediately by recalling the asymptotic expression for $E(D^{1}_{m,n})$.
\qed

\section{An Algorithmic Approach (for any finite $n$) }
In this section, we will give an algorithm which calculates exactly $E(D_{m,n}^{d})$ for specified $m$, $n$ and $d$ based on a state-space representation of the Markov process of collecting the coupons.
%
For each $n_0,n_1,n_2,\cdots, n_m \geq 0$ satisfying $n_0+n_1+\cdots + n_m = n$, define $S_m=(n_0,n_1,\cdots, n_m)$ to be the state where $n_{i}$ ($0\leq i\leq m$) is the number of coupons that the coupon collector has collected $i$ times. Hence, $E(D^{d}_{mn})$ is the expected number of runs for the coupon collector to go from state $(n,0,\ldots,0)$ to state $(0,\ldots,0,n)$. 

\label{sec:accurate} We now provide an algorithm to calculate
$E(D^{d}_{m,n})$. Define
\begin{eqnarray*}
& & N_m^d(S_m)\\
 & = & \mbox{starting from state $S_m$, the number of runs after}\\
&& \mbox{which $m$ completed sets of coupons have been}\\
&& \mbox{collected, i.e., the number of runs from state $S_m$}\\
&& \mbox{to $(0,\cdots,0,n)$}
\end{eqnarray*}
Clearly,
\begin{eqnarray}
&& N_m^d(n,0,\cdots, 0)= D_{m,n}^d,\label{eqn1}\\
&&N_m^d(0,\cdots,0, n)= 0,\label{eqn2}
\end{eqnarray}

Suppose we are at state $S_m=(n_0,n_1,\cdots, n_m)$. After one run,
the transition probability from $S_m$ to the following two states are as follows (w.p. is abbreviation for ``with probability''):
\[
\left\{\begin{array}{ll}
(n_0,n_1,\cdots, n_m)& \;\;\, \text{w.p.}\;\; \frac{{{n_m} \choose d}}{{n \choose d}}\\
(n_0,\cdots,n_i-1,n_{i+1}+1,\cdots, n_m)& \;\begin{array}{l}\text{w.p.}
\;\; p_i\\0\leq i<m\end{array}
\end{array}\right.
\]
where $p_i= \left({{\sum_{t=i}^m n_t} \choose d}-{{\sum_{t=i+1}^m
n_t} \choose d}\right)/{n \choose d}$.

Therefore, we have the equation
\begin{eqnarray*}
&& E[N_m^d(n_0,\cdots, n_m)]\\
&=&1+ {{n_m} \choose d}/{n \choose d} \times E[N_m^d(n_0,\cdots, n_m)]\\
&+&\sum\limits_{i=0}^{m-1}p_i
E[N_m^d(n_0,\cdots,n_i-1,n_{i+1}+1,\cdots, n_m)].
\end{eqnarray*}
So
\begin{eqnarray}
&& E[N_m^d(n_0,\cdots, n_m)]\nonumber\\
&=& \frac{{n \choose d}}{{n \choose d}-{{n_m} \choose d}} \nonumber\\
&&\sum\limits_{i=0}^{m-1}\left(
\frac{{{\sum_{t=i}^m n_t} \choose d}-{{\sum_{t=i+1}^m
n_t} \choose d}}{{n \choose d}-{{n_m} \choose d}} \right)\label{eqn4}\\
&& \times E[N_m^d(n_0,\cdots,n_i-1,n_{i+1}+1,\cdots,
n_m)].\nonumber
\end{eqnarray}

Define map $\Phi: \{(n_0,\cdots, n_m): n_0,n_1,\cdots, n_m \geq
0,n_0+n_1+\cdots + n_m = n\}\rightarrow \mathbb{N}$, where
\[\Phi(n_0,n_1,n_2,\cdots, n_m)=\sum_{i=0}^m
(1+n)^{m-i} n_i.\] Obviously, $\Phi$ is an injection and
\begin{eqnarray*}
\Phi(n,0,\cdots, 0)& = & n\cdot(1+n)^m\\
\Phi(0,\cdots, 0,n) & =& n.
\end{eqnarray*}
Since
\begin{eqnarray*}
&&\Phi(n_0,\cdots, n_m)\\
&&~~~~~~-\Phi(n_0,\cdots,n_i-1,n_{i+1}+1,\cdots,
n_m)\\
&=&((1+n)^{m-i} n_i + (1+n)^{m-i-1} n_{i+1})\\
&&-((1+n)^{m-i} (n_i-1)
+ (1+n)^{m-i-1} (n_{i+1}+1))\\
&= &(1+n)^{m-i}-(1+n)^{m-i-1}\\
&>& 0,
\end{eqnarray*}
by (\ref{eqn4}), the expected number of runs from a
state $S$ only depends on the expected number of runs from states
$S^*$'s with $\Phi(S^*)<\Phi(S)$. Therefore, we can order all the
states $(n_0,\cdots, n_m)$ according to the value of
$\Phi(n_0,\cdots, n_m)$ and compute $E[N_m^d(n_0,\cdots, n_m)]$
one by one, from the starting state $(0,\cdots, 0,n)$ to the last
state $(n,0,\cdots, 0)$. The algorithm is described in Algorithm
\ref{algorithm 1}.
\begin{algorithm}
\[\begin{array}{l}
\text{\textbf{for} }n_0=0\text{ \textbf{to} }n\\
\; \; \; \text{\textbf{for} }n_1=0\text{ \textbf{to} }n-n_0\\
\; \; \; \; \; \; \cdots \; \; \; \cdots \; \; \; \cdots\\
\; \; \; \; \; \; \; \; \; \text{\textbf{for} }n_{m-1}=0\text{
\textbf{to} }n-\sum_{i=0}^{m-2} n_i\\
\; \; \; \; \; \; \; \; \; \; \; \; \text{\textbf{do}
}n_m=n-\sum_{i=0}^{m-1} n_i\\
\; \; \; \; \; \; \; \; \; \; \; \; \; \; \; \; \; \text{\textbf{if} }n_m=n\\
\; \; \; \; \; \; \; \; \; \; \; \; \; \; \; \; \; \; \; \; \; \text{\textbf{then } }E[N_m^d(n_0,\cdots, n_m)]=0\\
\; \; \; \; \; \; \; \; \; \; \; \; \; \; \; \; \; \; \; \; \;
\text{\textbf{else} \ \ use equation (\ref{eqn4}) to compute}\\
\; \; \; \; \; \; \; \; \; \; \; \; \; \; \; \; \; \; \; \; \; \; \;
\; \; \; \; \; \; \; \ E[N_m^d(n_0,\cdots, n_m)]
\end{array}\]
\caption{Calculating $E(D^{d}_{m,n})$ \label{algorithm 1}}
\end{algorithm}

Since the number of non-negative integer solutions to the equation
$n_0+\cdots+n_m=n$ is ${n+m \choose n}$, the number of states is
${n+m \choose n}$, and the complexity of Algorithm \ref{algorithm 1}
is $O({n+m \choose n})$.

To conclude this section, we now use a simple example ($n=6, m=2$)
to illustrate Algorithm \ref{algorithm 1}. When $m=2$, each state
has $3$ parameters $n_0$, $n_1$, and $n_2$. Since $n_0+n_1+n_2=n$,
we could draw the state transition diagram as in Figure
\ref{fig10}.
\begin{figure}[htbp]
\centering
\includegraphics[bb=0 130 450 550,scale=0.38,angle=-90]{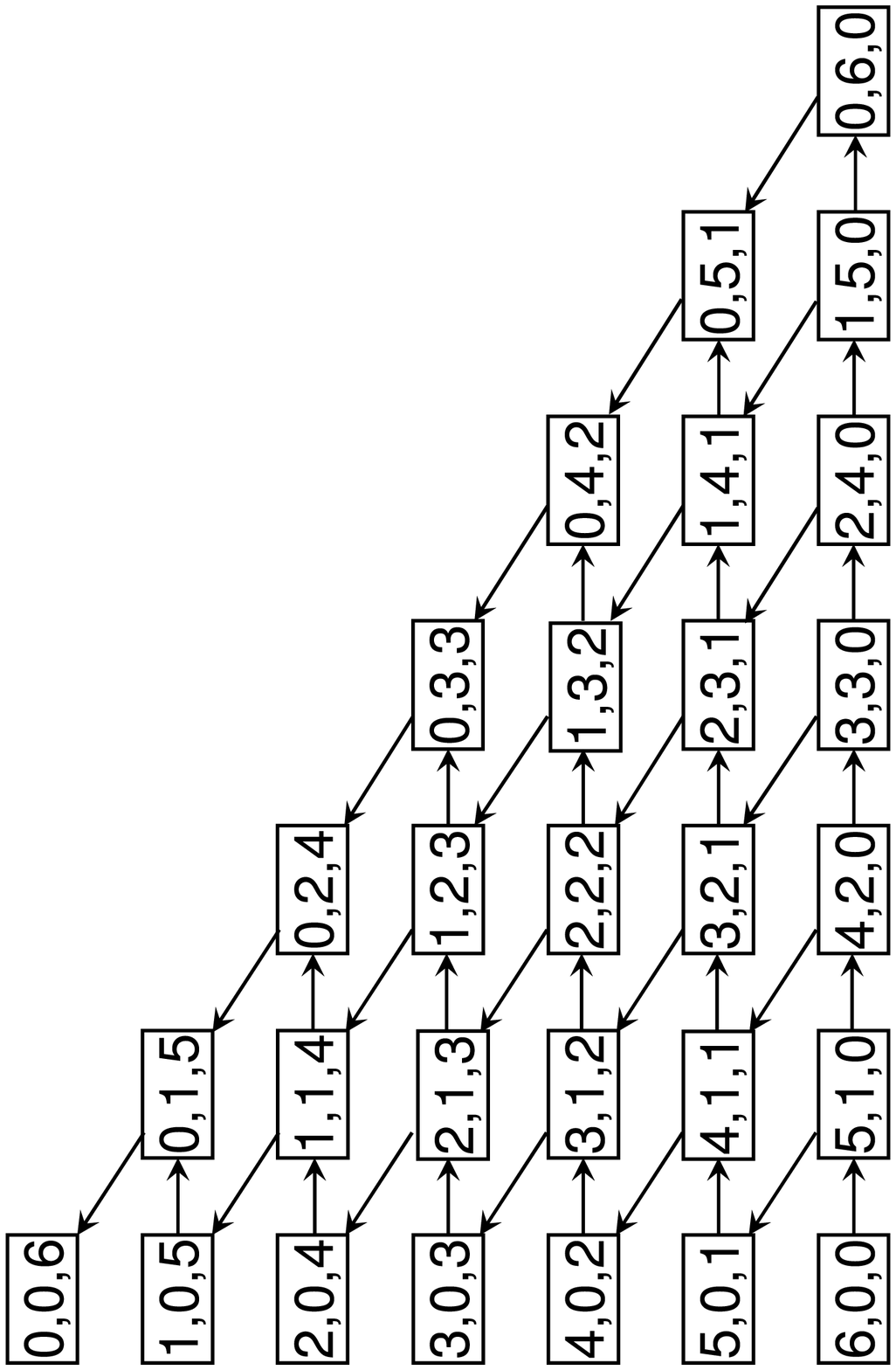}
\caption{State transition diagram for $n=6,m=2$.  Self-loops are
omitted. The nodes are labelled with the value of $n_0,n_1,n_2$.}
\label{fig10}
\end{figure}
Algorithm \ref{algorithm 1} computes $E[N_2^d(n_0,n_1,n_2)]$ for
each state $(n_0,n_1,n_2)$ by the order shown in Figure \ref{fig20}.
This is right because
\begin{itemize}
\item The expected number of runs from any state only depends on the
number of runs from its descents in Figure \ref{fig10}.
\item The computation of $E[N_2^d(\cdots)]$ for any state is done after the computations
for its descents by Figure \ref{fig20}.
\end{itemize}
\begin{figure}[htbp]
\centering
\includegraphics[bb=0 130 450 550,scale=0.38,angle=-90]{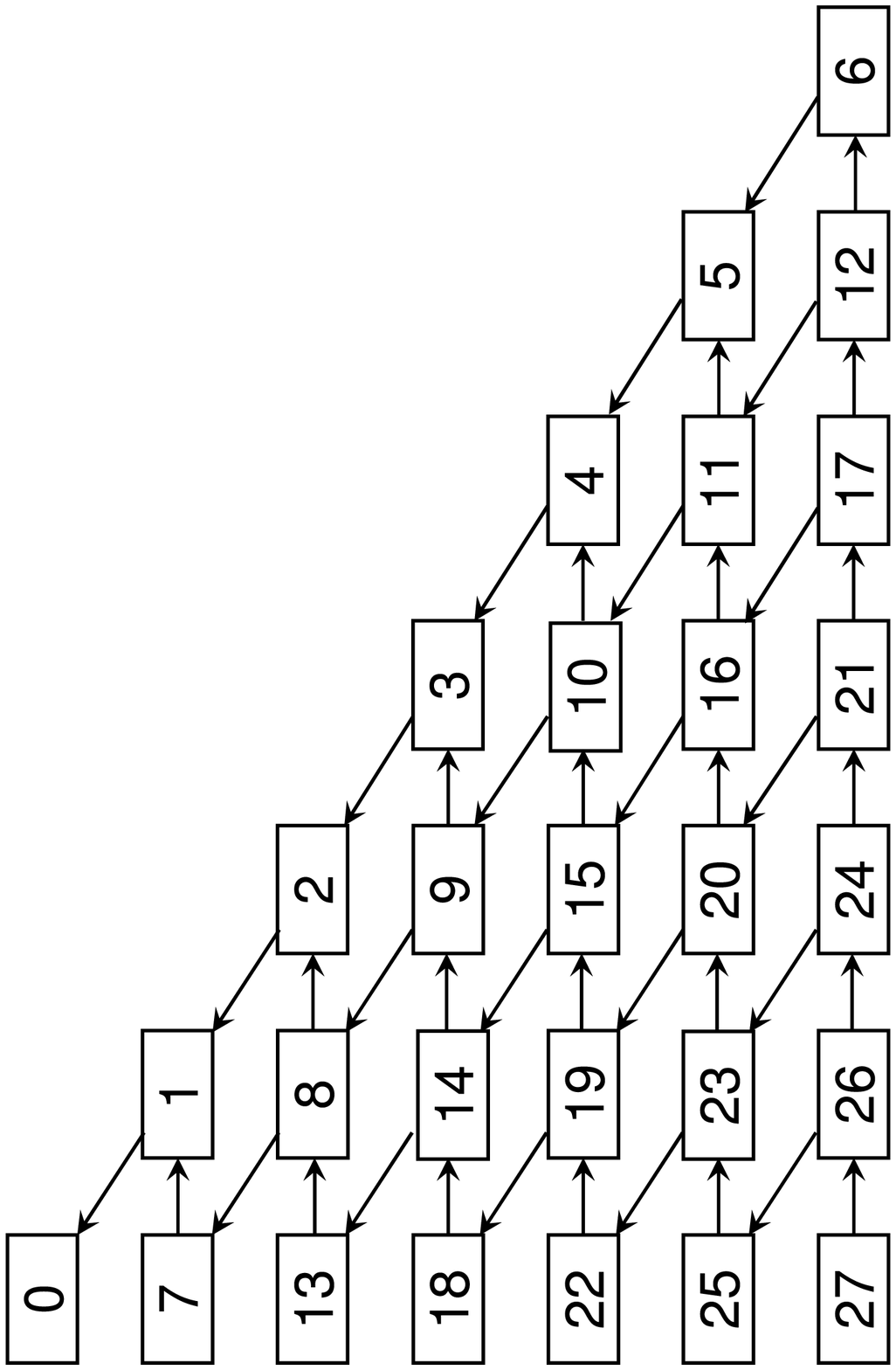}
\caption{State transition diagram for $n=6,m=2$.  The nodes are
labelled by the computation order. The highest node, which
represents $(0,0,6)$ is labelled 0 because $N_2^d(0,0,6)$ is known
to be 0.} \label{fig20}
\end{figure}
The values of $E[N_2^d(n_0,n_1,n_2)]$ for each state $(n_0,n_1,n_2)$
is shown in Figure \ref{fig30}.
\begin{figure}[htbp]
\centering
\includegraphics[bb=0 130 450 550,scale=0.38,angle=-90]{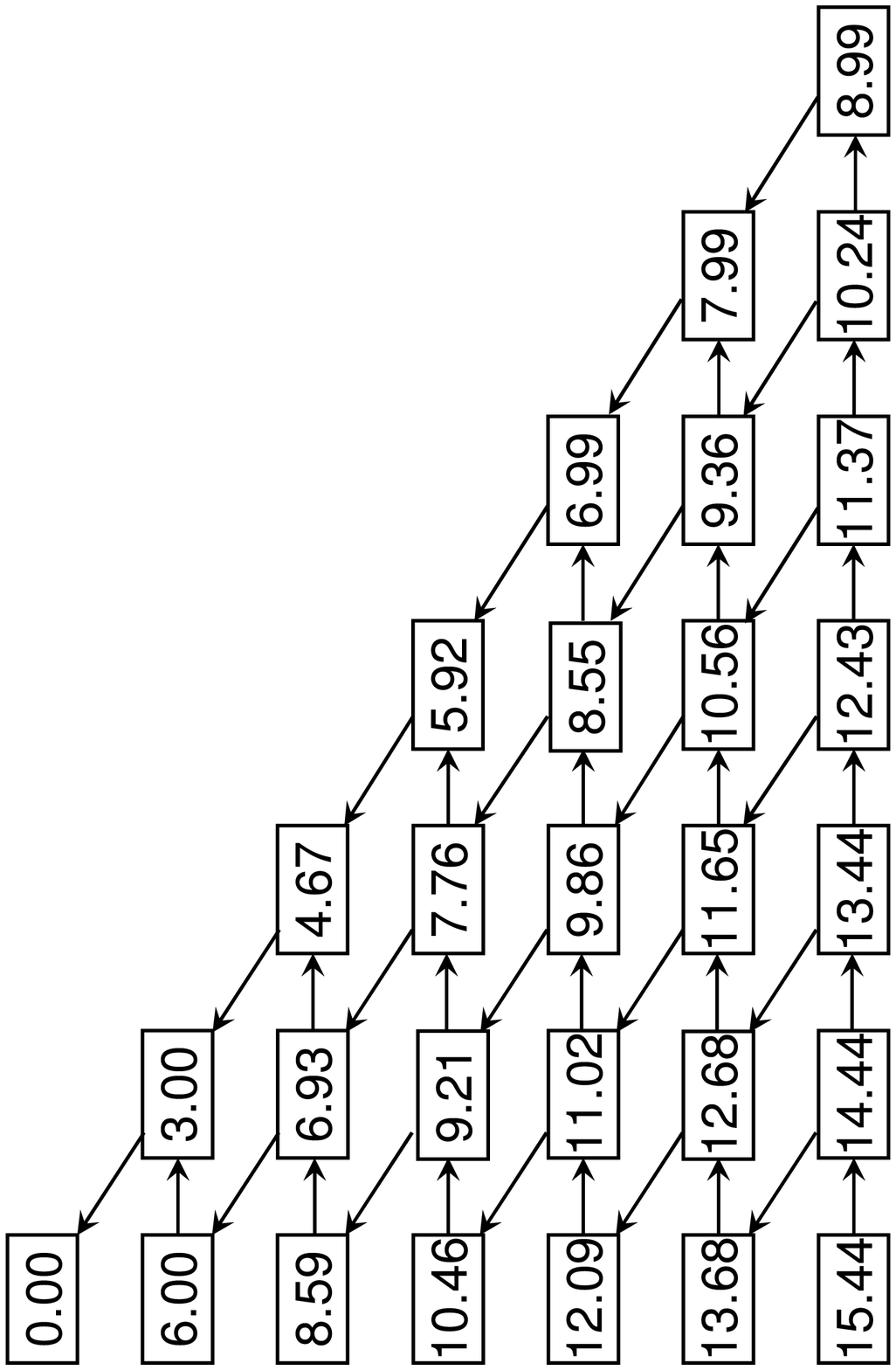}
\caption{State transition diagram for $n=6,m=2$.  The nodes are
labelled by $E[N_2^2(n_0,n_1,n_2)]$.} \label{fig30}
\end{figure}

\section{Numerical Examples}
\label{sec:simulation} We now engage in numerical exercises to
support results in the last two sections, i.e. correctness of
Algorithm \ref{algorithm 1} and the derived upper and lower bounds on $E(D_{m,n}^d)$.

\subsection{Algorithm}
First, we give numerical results for the expected collection time when $n=100$ and $m=1,2,3$ respectively, in three tables. The results show that Algorithm \ref{algorithm 1} gives an expected
delay consistent with the simulation results.

\begin{table}[tbh]
\centering \caption{$m=1$,$n=100$} \label{tab:table1}
\begin{tabular}{|c|c|c|c|c|c|}
\hline \multicolumn{1}{|c|}{$d$} &
\multicolumn{1}{|c|}{1} & \multicolumn{1}{|c|}{2}& \multicolumn{1}{|c|}{3}& \multicolumn{1}{|c|}{4}& \multicolumn{1}{|c|}{5} \\
 \hline
Algorithm & 518.74& 292.93& 220.06&184.79& 164.27 \\
Simulation &518.69& 292.40 & 219.33 & 184.59& 164.18 \\
\hline
\end{tabular}
\end{table}

\begin{table}[tbh]
\centering \caption{$m=2$,$n=100$} \label{tab:table2}
\begin{tabular}{|c|c|c|c|c|c|}
\hline \multicolumn{1}{|c|}{$d$} &
\multicolumn{1}{|c|}{1} & \multicolumn{1}{|c|}{2}& \multicolumn{1}{|c|}{3}& \multicolumn{1}{|c|}{4}& \multicolumn{1}{|c|}{5} \\
 \hline
Algorithm & 728.81& 418.69& 327.02&286.75& 264.84 \\
Simulation &728.20& 419.13 & 327.29 & 286.81& 264.68 \\
\hline
\end{tabular}
\end{table}

\begin{table}[tbh]
\centering \caption{$m=3$,$n=100$} \label{tab:table3}
\begin{tabular}{|c|c|c|c|c|c|}
\hline \multicolumn{1}{|c|}{$d$} &
\multicolumn{1}{|c|}{1} & \multicolumn{1}{|c|}{2}& \multicolumn{1}{|c|}{3}& \multicolumn{1}{|c|}{4}& \multicolumn{1}{|c|}{5} \\
 \hline
Algorithm & 910.87& 531.34&428.75& 386.97& 364.86 \\
Simulation & 910.09& 531.33 & 428.72 & 386.65& 364.90 \\
\hline
\end{tabular}
\end{table}

\newpage
\subsection{Asymptotic Results}
Two cases are considered: $(d=3, m=1)$ and $(d=3, m=2)$. For each
case, three lines are plotted. First, the lower bound from Theorem \ref{thm:bound1}
is plotted. Second, the upper bound is from Theorem \ref{thm:bound2} is computed. Finally, both plots are
compared against the result computed from the algorithm for $n$ from $100$ to $500$.
The results show that the upper bound and lower bound bound the expected collecting time very well. In fact, when $m$ and $d$ are fixed, the upper bound and lower bound will both scale as $\frac{1}{d}E(D_{m,n}^1)$ as $n \rightarrow \infty$.

\begin{figure}[htbp]
\centering
\includegraphics[scale=0.26]{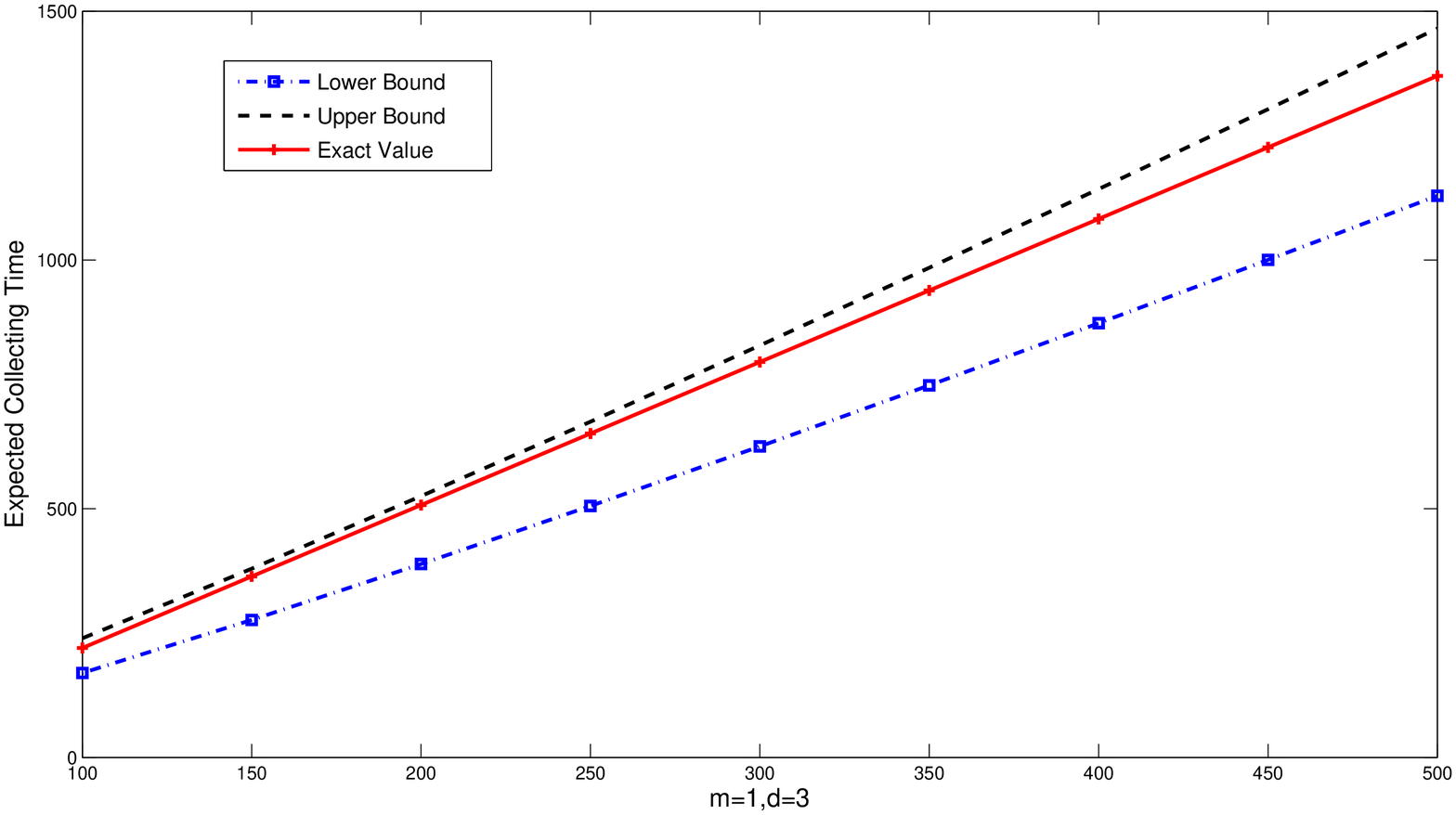}
\caption{Asymptotics for $m=1$, $d=3$.} \label{fig4}
\end{figure}
\begin{figure}[htbp]
\centering
\includegraphics[scale=0.26]{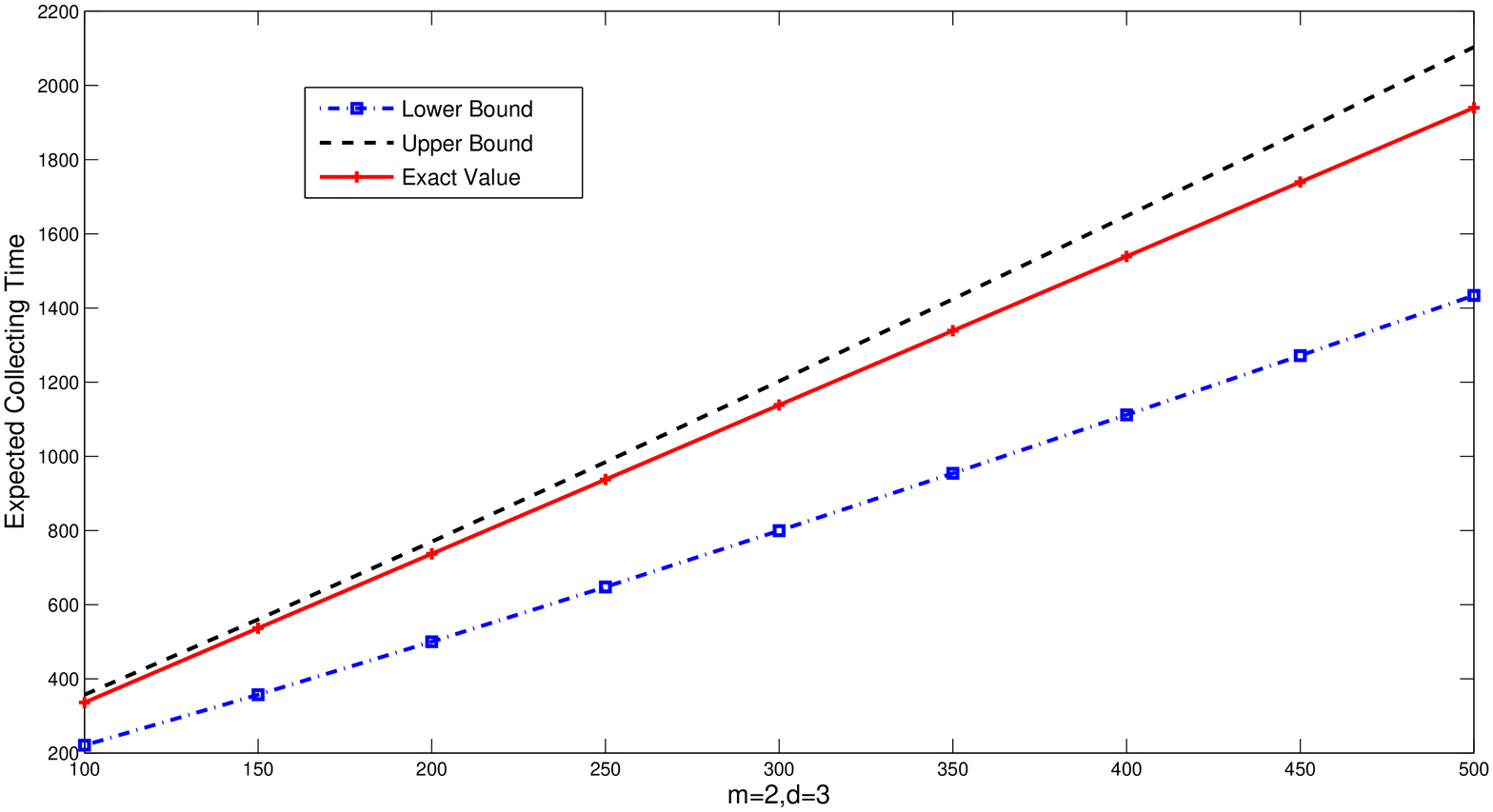}
\caption{Asymptotics for $m=2$, $d=3$.} \label{fig5}
\end{figure}

\section{Conclusion and Future Work}
In this paper, we considered a generalized coupon collector problem where the coupon collector needs to collect $m \geq 1$ sets of coupons and  has the freedom of keeping one coupon out of $d \geq 1$ coupons offered each time. We obtained asymptotically matching upper and lower bounds for the expected collection time.  We also provided an algorithm to calculate the expected collection time exactly based on a state representation for the coupon collecting process. We should note that asymptotically even if the coupon collector is only allowed to keep $1$ coupon out of the $d$ coupons, the needed time will still be shortened by a factor of $d$, as if the coupon collector is allowed to keep all the $d$ coupons offered each time.

There is much avenue for future work on this problem. First, one could attempt to get a closed-form expression for $E(D_{m,n}^d)$. Second, one could attempt to improve Algorithm 1. Algorithm 1 has a runtime of ${n+m \choose n}$.
To take advantage of this runtime requires constant time indexing. The direct approach is to index
the states in an $n$-dimensional matrix of size $(n+1)^m+1$. However, since there are a total of
${n+m \choose n}$ states, a large fraction of the matrix space is not required. Hence, it would be helpful
to find an algorithm which carries out triangular indexing in constant time. This would reduce the memory
requirements and increase the range of parameters over which the problem is computationally feasible. One could further observe that although there are ${n+m \choose n}$ states, only ${n+m-1 \choose m-1}$ are actually needed at any time. So with constant time triangular indexing, one can reduce the memory requirements further although the gain from the second reduction is
minimal.

\section*{Acknowledgment}
The research is supported by NSF under CCF-0835706. The authors would like to thank the input of Wuhan Desmond Cai.

\end{document}